\newtheorem{definition}{Definition}
\newtheorem{lemma}{Lemma}
\newtheorem{theorem}{Theorem}
\newtheorem{corollary}{Corollary}
\title{\LARGE \bf  Controllability-Gramian Submatrices for a Network Consensus Model}
\author{Sandip Roy and Mengran Xue\thanks{The authors are with the School of Electrical Engineering and Computer Science at Washington State University.  Please send correspondence to
\{sandip, morashu\}@wsu.edu.  The first author acknowledges support
from United States National Science Foundation Grants 1545104 and 1635184.}}
\begin{document}
\maketitle

\begin{abstract}
Principal submatrices of the controllability Gramian and their inverses are examined, for a network-consensus model with inputs at a subset of network nodes.  Specifically, several properties of the Gramian submatrices and their inverses -- including dominant eigenvalues and eigenvectors, diagonal entries, and sign patterns -- are characterized by exploiting the special doubly-nonnegative structure of the matrices.  In addition, majorizations for these properties are obtained in terms of cutsets in the network's graph, based on the diffusive form of the model.  The asymptotic (long time horizon) structure of the controllability Gramian is also analyzed. 
The results on the Gramian are used to study metrics for target control of the network-consensus model. 
\end{abstract}

\section{Introduction}

Dynamical models for consensus or synchronization in networks have been exhaustively studied \cite{chua,xiao,ren}.  One focus of this effort has been on open-loop control of the dynamics using inputs at a subset of the network's 
nodes \cite{rahmani,control1,chen1,xue,pasqualetti,dhal}. In particular, graph-theoretic necessary or sufficient conditions for controllability have been obtained, and some characterizations of the required control energy have also been obtained using analyses of the controllability Gramian.  Recently, researchers have begun to study {\em target control} of network models, wherein inputs are designed to manipulate a group of target nodes rather than the whole network \cite{camlibel,vosughitarget,guan,physics,physics2}.  The target-control problem is of practical interest in several application domains, in which stakeholders need to use limited actuation capabilities to guide a few key nodes' states.  
In parallel with the general controllability analysis for networks, the effort on target controllability has also yielded graph-theoretic conditions and analyses of metrics. These studies demonstrate that limited-energy target control of network processes is sometimes possible even when full-state control is prohibitively costly or impossible.

Target control for network models can be analyzed in terms of principal submatrices of the controllability Gramian \cite{vosughitarget,camlibel}.  Precisely, target controllability resolves to invertibility of a principal submatrix of the Gramian, while the minimum energy required to achieve a desired target state and/or guide certain state projections can be found in terms of quadratic forms of the Gramian-submatrix inverses.  Thus, the study of target controllability motivates analysis of the Gramian matrix and its principal submatrices for network consensus/synchronization models.  

Because of the relevance of Gramian matrices to network controllability as well as dual observability/estimation problems, some structural and graph-theoretic results on the full Gramian have been developed for canonical network-consensus models, as well as for other dynamical network processes \cite{dhal,summers,vosughitarget,dinuka,pasquagramian}. 
In addition, explicit formulae for the Gramian inverse in terms of the network model's spectrum have been developed, using Cauchy matrix properties \cite{dhal}.  These explicit computations give insight into the relationship between the network model's spectrum and the required control energy.

The purpose of this study is to develop new characterizations of the Gramian and its principal submatrices in the context of a canonical discrete-time network consensus model, with the goal of assessing target control metrics.  Relative to 
the earlier studies on the Gramian of network models, the main contributions of this work are to: 1) characterize principal submatrices of the Gramian and their inverse, in addition to the full Gramian matrix; 2) give new insights into the sign patterns and eigenvalues of Gramian- and Gramian- submatrix  inverses; 3) develop graph-theoretic majorizations on the Gramian's entries; and 4) assess target control metrics and optimal inputs using the results on Gramian submatrices.  The analyses primarily draw on the diffusive structure of the network model, which imposes a spatial pattern on the input response of the network.   A main result is that Gramian submatrix inverses exhibit a special sign pattern as well as a dependence on cutsets of the network graph, which allows majorization of the control energy and analysis of minimum-energy inputs.

Although our focus here is on target control, the analyses of Gramian submatrices are germane  to the control and estimation of various network dynamical processes with a diffusive or nonnegative structure. In particular, the analyses are relevant to the controllability analysis of other discrete-time models with nonnegative state matrices
and continuous-time models with Metzler state matrices 
(e.g., models for infection spread, economic systems, etc) \cite{berman,spread}.  The results also inform other problems in network estimation and control which require consideration of Gramians and Markov parameters, including observability analysis and model reduction \cite{observability,mred1}.

The article is organized as follows.  The network consensus model is presented in Section II, and analysis of target control metrics in terms of the Gramian is reviewed in Section III.  The main results on the Gramian, and their implications on target control, are developed in Section IV. Finally, the graph-theoretic characterization of target control is illustrated in an example in Section V.

\section{Model}

A network with $n$ nodes, labeled $1,\hdots, n$, is considered.  Each node $i$ has a scalar state $x_i[k]$ which evolves in discrete time ($k \in Z^+$).  A set ${\cal S}$ containing $m$ nodes, which we call the {\em source nodes}, are amenable to external actuation.  The nodes' states evolve according to:
\begin{equation}
    {\bf x}[k+1]=A {\bf x}[k]+B{\bf u}[k],
\end{equation}
where the state vector is ${\bf x}[k]=\begin{bmatrix} x_1[k] \\ \vdots \\ x_n[k] \end{bmatrix}$, $A=[a_{ij}]$ is a row-stochastic matrix ($a_{ij} \ge 0$, $\sum_{j=1}^n a_{ij}=1$ for each $i$), ${\bf u}[k]=\begin{bmatrix} u_1[k] \\ \vdots \\ u_m[k] \end{bmatrix}$ specifies the input (actuation) signals at the $m$ source nodes, and $B$ is an $n \times m$ matrix whose columns are $0--1$ indicators of the source nodes in ${\cal S}$. 

The manipulation of the states of a set ${\cal T}$ containing $p$ {\em target nodes} is of interest. The {\em target state} vector ${\bf y}[k]$, defined as containing the states of the $p$ target nodes at time $k$, can be expressed as:
\begin{equation}
{\bf y}[k]=C {\bf x}[k],
\end{equation}
where $C$ is a $p\times m$ matrix whose rows are $0--1$ indicators of the target nodes in ${\cal T}$. We refer to
the model as a whole as the {\em input-output network consensus model} or simply the {\em network model}.

A weighted digraph $\Gamma$ is defined to represent the topology of nodal interactions in the network model.  Specifically, $\Gamma$ is defined to have $n$ nodes labeled $1,\hdots, n$, which correspond to the $n$ vertices.  A directed edge is drawn from vertex $i$ to vertex $j$ if and only if $a_{ji}>0$, and the weight of the edge is set to $a_{ji}$.
We note that the graph may include self loops (i.e., edges from vertices to themselves).  The sum of the weights of the incoming edges to each vertex is $1$.  An edge in $\Gamma$ from node $i$ to node $j$ indicates that node $j$'s state at time $k+1$ is directly influenced by the node $i$'s state at time $k$.   The vertices corresponding to the source and target nodes are referred to as source and target vertices, respectively.

Throughout the article, we assume that the matrix $A$ is irreducible and aperiodic, or equivalently that the graph $\Gamma$ is ergodic.  Under this assumption, the unactuated model reaches consensus, i.e. the manifold where all nodes' states are identical is globally asymptotically stable.

\section{Preliminaries: Target Control and the Gramian}

Target control of the network model is primarily concerned with two questions: 1) deciding whether the input ${\bf u}[k]$ can be designed to guide the target state vector ${\bf y}[k]$
to a desired goal (i.e. analyzing {\em target controllability}); and 2) determining how much actuation energy or effort is needed to do so (i.e. assessing {\em target control metrics}).  Our primary focus here is on assessing target control metrics.

Formally, target controllability
is defined as follows:
\begin{definition}
The input-output network-consensus model is said to be target controllable over $[0,k_f]$ if, for any goal  $\overline{\bf y} \in {\cal R}^p$, an input signal ${\bf u}[0],\hdots, {\bf u}[k_f-1]$ can be designed to drive the network model from a relaxed initial state ${\bf x}[0]=0$ to the target state ${\bf y}[k_f]=\overline{\bf y}$.
\end{definition}

In the case that the network model is target controllable over an interval $[0,k_f]$, the 
minimum input energy required to achieve each goal state can be assessed.  This notion is formalized in the following definition:
\begin{definition}
The {\em target-control energy} for an interval $[0,k_f]$ and goal state $\overline{\bf y}$ is defined as $E(k_f,\overline{\bf y})=\min_{{\bf u}[0],\hdots, {\bf u}[k_f-1]}\sum_{i=0}^{k_f-1} {\bf u}^T[i] {\bf u}[i]$, subject to the constraint that the input sequence drives the system from a relaxed state to ${\bf y}[k_f]=\overline{\bf y}$.  We refer to an argument (input sequence) that achieves the minimum as an {\em optimal target-control input}.  
\end{definition}

Additionally, the energy required to drive projections of the target state to a unit value may be important to characterize, as an indication of the the manipulability or security of key output statistics.  This notion is formalized as follows:
\begin{definition}
The {\em projection-manipulation  energy} for an interval $[0,k_f]$ and projection vector ${\bf \alpha}$ is defined as $F(k_f,\overline{\bf \alpha})=\min_{{\bf u}[0],\hdots, {\bf u}[k_f-1]}\sum_{i=0}^{k_f-1} {\bf u}^T[i] {\bf u}[i]$, subject to the constraint that the input sequence drives the system from a relaxed state to ${\bf \alpha}^T {\bf y}[k_f]=1$.  We refer to an input sequence that achieves the minimum as an {\em optimal projection-manipulation input}.  
\end{definition}

Often, it is of interest to characterize extremal values of the target-control energy across goal states with a particular norm, or dually of the projection-manipulation energy across projection vectors of a certain norm.  In particular, the minimum values of the two metrics across goal states and  projection vectors, respectively, are indications of the security of the network model to manipulation.  These global security notions are formalized in the following definitions:

\begin{definition}
The minimum of the target-control energy over unit-two-norm goal states, i.e. $E_{min}(k_f)=\min_{\overline{\bf y} \, s.t. \, ||{\overline{\bf y}}||_2=1} E(k_f, \overline{\bf y})$, is referred to as the {\em target security} of the network model.  A goal state $\overline{\bf y}$ that achieves the minimum is denoted as $\overline{\bf y}_{min}$, and is termed a {\em minimally-secure goal}.
\end{definition}

\begin{definition}
The minimum of the projection-manipulation energy over projection vectors with unit one-norm, i.e. $F_{min}(k_f)=\min_{\overline{\bf \alpha } \, s.t. \, ||{\bf \alpha}||_1=1} F(k_f, \overline{\bf \alpha})$, is referred to as the {\em projection security} of the network model.  A projection vector $\overline{\bf \alpha}$ that achieves the minimum is denoted as $\overline{\bf \alpha}_{min}$, and is termed a {\em minimally-secure projection}.
\end{definition}

{\em Remark:}  Other norms may be used in the security definitions. We have assumed a $1$-norm constraint on the projection vector in the projection-security definition, because the weighted sum of nodal quantities is often of interest for diffusive processes.  Meanwhile, we have assumed a two-norm constrain on the goal state in the target security definition, in keeping with standard assessments of controllability/security of linear models.

Target controllability and the target-control metrics can readily be characterized in terms of principal submatrices of the controllability Gramian of the network model.  The controllability Gramian for the network model over the interval $[0,k_f]$ is given by:
\begin{equation}
   W(k_f)= \sum_{i=0}^{k_f-1} (A^i B)(A^i B)^T .
\end{equation}
We define principal submatrices of the controllability Gramian using a set ${\cal B}$ which lists a subset of the nodes $1,\hdots, n$ in the network.  The ${\cal B}$-controllability Gramian $W({\cal B},k_f)$ is defined as the principal submatrix of $W(k_f)$ in which the rows and columns indicated in ${\cal B}$ are maintained. 

The following lemma provides characterizations of target  controllability and the target control metrics in terms of the ${\cal T}$-controllability Gramian (i.e. the principal submatrix of controllability Gramian associated with the target nodes ${\cal T}$).  These results follow directly from standard analyses of output controllability \cite{output,rugh}, hence the proof is omitted.

\begin{lemma} \label{lem:1}
The input-output network-consensus model is target controllable if and only if the ${\cal T}$-controllability Gramian 
$W({\cal T},k_f)$ is invertible.  

If the network model is target controllable, then the target-control energy is given by:
\begin{equation}
E(k_f,\overline{\bf y})= \overline{\bf y}^T W({\cal T},k_f)^{-1} \overline{\bf y},
\end{equation}
and an optimal target-control input is
\begin{equation}
    \widehat{\bf u}[i]=(C A^{k_f-i-1} B)^T W({\cal T},k_f)^{-1} \overline{\bf y}
\end{equation}
for $i=0,\hdots, k_f-1$.
Further, the target security is given by:
\begin{equation}
E_{min}(k_f)= \frac{1}{\lambda_{max} (W({\cal T},k_f))},
\end{equation}
where 
$\lambda_{max}(W({\cal T},k_f))$ refers to the largest eigenvalue of matrix
$W({\cal T},k_f)$.
The minimally secure goal is given by
\begin{equation}
\overline{\bf y}_{min}={\bf v}_{max}(W({\cal T},k_f)),
\end{equation}
where ${\bf v}_{max}(W({\cal T},k_f))$ is the right eigenvector of $W({\cal T},k_f)$ associated with the largest eigenvalue.

The projection manipulation energy is given by:
\begin{equation}
F_{min}(k_f,{\bf \alpha})= \frac{1}{{\bf \alpha }^T W({\cal T},k_f) {\bf \alpha}},
\end{equation}
The projection security is given by
\begin{equation}
F_{min}(k_f)= \frac{1}{\max_i [W({\cal T},k_f)]_{i,i}},
\end{equation}
The minimally-secure projection is given by
$\overline{\bf \alpha}_{min}={\bf e}_j$, where $j=\arg \max_i [W({\cal T},k_f)]_{i,i}$.
\end{lemma}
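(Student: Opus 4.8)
The plan is to reduce every claim to the standard minimum-energy analysis of the linear map from the stacked input to the target state, and then to recognize the Gramian submatrix as the associated output-controllability Gramian. First I would write the zero-state response at the final time, $\mathbf{x}[k_f]=\sum_{i=0}^{k_f-1}A^{k_f-1-i}B\,\mathbf{u}[i]$, so that $\mathbf{y}[k_f]=C\mathbf{x}[k_f]=\mathcal{M}\mathbf{U}$, where $\mathbf{U}=[\mathbf{u}[0]^T,\ldots,\mathbf{u}[k_f-1]^T]^T$ collects the inputs and $\mathcal{M}=[\,CA^{k_f-1}B\ \cdots\ CB\,]$. The key observation is that the output-controllability Gramian factors as $\mathcal{M}\mathcal{M}^T=\sum_{j=0}^{k_f-1}(CA^jB)(CA^jB)^T=C\,W(k_f)\,C^T$, and because $C$ is the $0$--$1$ selection matrix for the target nodes, $C\,W(k_f)\,C^T$ is precisely the principal submatrix $W(\mathcal{T},k_f)$. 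Thus all statements reduce to properties of the single matrix $\mathcal{M}\mathcal{M}^T=W(\mathcal{T},k_f)$.

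With this identity in hand, target controllability is the surjectivity of $\mathbf{U}\mapsto\mathcal{M}\mathbf{U}$ onto $\R^p$, i.e. full row rank of $\mathcal{M}$, which holds if and only if $\mathcal{M}\mathcal{M}^T=W(\mathcal{T},k_f)$ is invertible; this gives the controllability criterion. For the energy, I would solve $\min\|\mathbf{U}\|_2^2$ subject to $\mathcal{M}\mathbf{U}=\overline{\mathbf{y}}$ by the standard minimum-norm (Lagrange/pseudoinverse) argument, obtaining $\widehat{\mathbf{U}}=\mathcal{M}^T(\mathcal{M}\mathcal{M}^T)^{-1}\overline{\mathbf{y}}$ with optimal cost $\overline{\mathbf{y}}^TW(\mathcal{T},k_f)^{-1}\overline{\mathbf{y}}$. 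Partitioning $\widehat{\mathbf{U}}$ into its $k_f$ blocks and noting that the $i$-th block of $\mathcal{M}^T$ is $(CA^{k_f-1-i}B)^T$ yields the stated optimal input $\widehat{\mathbf{u}}[i]=(CA^{k_f-1-i}B)^TW(\mathcal{T},k_f)^{-1}\overline{\mathbf{y}}$.

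The security metrics then follow from extremizing the relevant quadratic forms. For target security I would minimize the Rayleigh quotient $\overline{\mathbf{y}}^TW(\mathcal{T},k_f)^{-1}\overline{\mathbf{y}}$ over $\|\overline{\mathbf{y}}\|_2=1$; since $W(\mathcal{T},k_f)^{-1}$ is symmetric positive definite, the minimum equals its smallest eigenvalue $1/\lambda_{\max}(W(\mathcal{T},k_f))$, attained at the corresponding eigenvector, which is $\mathbf{v}_{\max}(W(\mathcal{T},k_f))$. For projection manipulation, the constraint $\alpha^T\mathbf{y}[k_f]=1$ is the single linear constraint $(\mathcal{M}^T\alpha)^T\mathbf{U}=1$, whose minimum-norm solution has cost $1/\|\mathcal{M}^T\alpha\|_2^2=1/(\alpha^TW(\mathcal{T},k_f)\alpha)$, giving the projection-manipulation energy.

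The one step that departs from the routine $\ell_2$ Rayleigh-quotient calculus --- and which I expect to be the main obstacle --- is projection security, where I must minimize $1/(\alpha^TW(\mathcal{T},k_f)\alpha)$ over $\|\alpha\|_1=1$, equivalently maximize the quadratic form $\alpha^TW(\mathcal{T},k_f)\alpha$ over the cross-polytope. Because $W(\mathcal{T},k_f)$ is positive semidefinite, this quadratic form is convex, and a convex function attains its maximum over a compact polytope at an extreme point; the extreme points of the unit $\ell_1$ ball are $\pm\mathbf{e}_j$, at which the form takes the value $[W(\mathcal{T},k_f)]_{j,j}$. Hence the maximum is $\max_i[W(\mathcal{T},k_f)]_{i,i}$ and the projection security is its reciprocal, with minimally-secure projection $\mathbf{e}_j$ for $j=\arg\max_i[W(\mathcal{T},k_f)]_{i,i}$. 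The care needed here is justifying the extreme-point reduction for the convex maximization over the polytope, rather than naively applying the spectral/$\ell_2$ reasoning that would be valid only for a two-norm constraint.
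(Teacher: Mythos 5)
Your proposal is correct and is exactly the standard output-controllability/minimum-norm argument that the paper itself invokes (it omits the proof, citing standard references), resting on the same key identity $\mathcal{M}\mathcal{M}^T=C\,W(k_f)\,C^T=W(\mathcal{T},k_f)$. Your extra care on the projection-security step --- reducing the maximization of the convex quadratic form over the $\ell_1$ constraint to the extreme points $\pm{\bf e}_j$ --- is the right way to handle the one non-$\ell_2$ piece, and the rest follows the routine pseudoinverse and Rayleigh-quotient calculus.
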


Lemma 1 is the starting point for the main graph-theoretic and structural analyses developed in the paper.

\section{Main Results}

Per Lemma 1, target controllability and the target control metrics are tied to properties of the controllability Gramian.  Our focus here is to develop structural and
graph-theoretic results on the Gramian of the network model, with the aim of giving insights into target control.
Because a number of graph-theoretic results have already been developed for the binary question of target controllability \cite{camlibel,vosughitarget,camlibel2}, we will primarily focus on the target-control metrics.  

First, we identify some matrix-theoretic properties of Gramian submatrices and their inverses for the network model.  These properties depend on the diffusive structure of the network-consensus model, but not on the specifics of the network's topology.  

\begin{theorem}
Consider any principal submatrix of the Gramian for the network model, say $Q=W({\cal B},k_f)$.  Also, for invertible $Q$,  consider $R=Q^{-1}$.
The matrices $Q$ and $R$ have the following properties:
\begin{enumerate}
    \item $Q$ is doubly nonnegative, i.e. it is symmetric, positive semi-definite, and entry-wise nonnegative.  For sufficiently
    large $k_f$, $Q$ is entry-wise strictly positive
    \item  The eigenvalues of $Q$ are real, nonnegative, and 
    non-defective. For sufficiently large $k_f$, the largest eigenvalue $\lambda_{max}(Q)$ has algebraic multiplicity of $1$, and its associated  eigenvector ${\bf v}_{max}(Q)$ is strictly positive (to within a scale factor).
    \item $\lambda_{max}(Q) \le \lambda_{max}(W[k_f])$.  Furthermore, for sufficiently large $k_f$, the inequality is strict.
    \item The matrix $R$ is symmetric and positive definite.  For sufficiently large $k_f$,  $R$ is irreducible.
    \item  Consider any permutation $T=PRP^{-1}$ of the matrix $R$, and consider any block-partition of $T$ as $T=\begin{bmatrix} T_{11} & T_{12} \\ T_{12}^T & T_{22} \end{bmatrix}$ where $T_{11}$ is square.  Assuming that $k_f$ is sufficiently large, the matrix 
    $T_{12}$ has at least one negative entry.   
    \item  For the special case
    that the cardinality of the set ${\cal B}$ (denoted $|{\cal B}|$) is $2$, the matrix $R$ is a nonsingular $M$ matrix.
\end{enumerate}
\end{theorem}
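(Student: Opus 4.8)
The plan is to reduce the claim to the standard characterization of a nonsingular symmetric $M$-matrix and then verify a single sign condition. Recall that a real symmetric matrix is a nonsingular $M$-matrix precisely when it is positive definite and all of its off-diagonal entries are nonpositive (a Stieltjes matrix); see \cite{berman}. Part 4 of the theorem already establishes that $R$ is symmetric and positive definite, so the only remaining task is to confirm the off-diagonal sign pattern. I would therefore frame the proof so that parts 1 and 4 do most of the work, leaving an explicit $2 \times 2$ inverse computation as the core step.

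Since $|{\cal B}|=2$, I would write $Q = \begin{bmatrix} a & b \\ b & c \end{bmatrix}$. By part 1, $Q$ is doubly nonnegative, so $a,b,c \ge 0$ and $\det Q = ac - b^2 \ge 0$; because $Q$ is assumed invertible, in fact $\det Q = ac - b^2 > 0$. Using the closed form for the inverse of a $2 \times 2$ matrix,
\begin{equation}
R = Q^{-1} = \frac{1}{ac - b^2}\begin{bmatrix} c & -b \\ -b & a \end{bmatrix}.
\end{equation}
The off-diagonal entry of $R$ equals $-b/(ac-b^2)$, which is nonpositive because $b \ge 0$ and $ac - b^2 > 0$. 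Hence $R$ is a symmetric $Z$-matrix, and combining this with the positive definiteness from part 4 yields exactly the Stieltjes characterization of a nonsingular symmetric $M$-matrix.

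I do not expect a genuine obstacle in this case: the entire argument collapses to the single sign computation above, and the only real point requiring care is invoking the correct $M$-matrix characterization and confirming that $\det Q$ is strictly positive rather than merely nonnegative. What is worth emphasizing is \emph{why} the result is special to $|{\cal B}|=2$: inversion reverses the sign of the off-diagonal entry only in the $2 \times 2$ setting, so a doubly-nonnegative $Q$ is forced to have a $Z$-matrix inverse. For larger $|{\cal B}|$ this sign reversal fails in general, which is precisely the reason part 5 can assert only the existence of one negative off-diagonal block entry rather than the full $M$-matrix structure obtained here.
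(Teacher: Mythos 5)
Your proposal addresses only Item 6 of the six-part theorem. For that item the argument is correct and essentially the same as the paper's: the paper likewise combines the positive definiteness of $R$ with the explicit $2\times 2$ inversion formula to conclude that the off-diagonal entries are nonpositive, hence that $R$ is a (Stieltjes) nonsingular $M$-matrix. Your version is in fact slightly more careful, since you observe that the off-diagonal entry $-b/(ac-b^2)$ is only guaranteed to be \emph{nonpositive} (it vanishes when $b=0$), whereas the paper asserts it is negative; the $M$-matrix conclusion survives either way. Your closing remark about why the sign reversal is special to the $2\times 2$ case is a nice complement to the paper's discussion following Theorem 1.

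The genuine gap is everything else. The statement comprises Items 1 through 6, and your proposal invokes Items 1 and 4 as established facts rather than proving them, and says nothing about Items 2, 3, and 5. The paper's proofs of those items are not routine: Item 1 rests on writing each entry as $q_{ij}=\sum_{z\in{\cal S}} h_{zi,k_f}^T h_{zj,k_f}$ in terms of nonnegative impulse responses, with strict positivity for large $k_f$ following from irreducibility and aperiodicity of $A$; Item 2 applies Perron--Frobenius theory to the resulting strictly positive matrix; Item 3 obtains strictness of $\lambda_{max}(Q)<\lambda_{max}(W[k_f])$ by a Courant--Fischer contradiction argument using the strict positivity of the dominant eigenvector of $W[k_f]$; Item 4 requires the observation that the inverse of an entrywise strictly positive matrix is irreducible; and Item 5 relies on a nontrivial external result of Fiedler (extended in the authors' companion note) on sign patterns of inverses of doubly positive and irreducible doubly nonnegative matrices. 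None of these arguments appear in your proposal, so as a proof of the stated theorem it is incomplete; as a proof of Item 6 alone, conditional on Items 1 and 4, it is fine.
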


\begin{proof}
{\em Proof of Item 1:}  The Gramian is symmetric and positive semidefinite, hence it is immediate that the principal
submatrix $Q$ is also symmetric and positive semidefinite. We characterize the signs of the entries in $Q=[q_{ij}]$ 
as follows.  First, 
these entries are expressed in terms of the impulse responses of the network model at network nodes
due to inputs at each source node.  To simplify indexing in this analysis, we assume without loss of generality that the target nodes are the nodes $1,\hdots, p$, and hence the Gramian submatrix of interest is a leading principal submatrix.  In this case,
the entry $q_{ij}$ can be written as 
$q_{ij}=\sum_{z \in {\cal S}} h_{zi,k_f}^T h_{zj,k_f}$,
where $h_{zl,k_f}=\begin{bmatrix} {\bf e}_z^T {\bf e}_l
& {\bf e}_z^T A {\bf e}_l & \hdots & {\bf e}_z^T A^{k_f-1} {\bf e}_l \end{bmatrix}$, where the notation ${\bf e}_w$ is used for a 0--1 indicator vector with $w$th entry equal to $1$.  We notice
that $h_{zl,k_f}$ encodes the impulse response at node $l$ due to an input at node $z$.  Since the matrix $A$ is 
nonnegative, it is immediate that $h_{zl,k_f}$ is nonnegative, and hence $q_{ij} \ge 0$.  From the fact
that $A$ is irreducible and aperiodic, it follows that
the final entry in the vector $h_{zl,k_f}$ is strictly positive for all $z$ and $l$, for all sufficiently large $k_f$ (see \cite{gallagher}).  Thus, $q_{ij}$ is necessarily strictly positive for sufficiently large $k_f$. 

{\em Proof of Item 2:} Since $Q$ is symmetric and positive semidefinite, it is immediate that its eigenvalues are real, nonnegative, and nondefective (i.e. each eigenvalue's algebraic and geometric multiplicities are identical).  For sufficiently large $k_f$, we have shown above that $Q$ is strictly positive.  It thus follows from the Frobenius-Perron theory that $Q$ has a dominant eigenvalue (an eigenvalue with magnitude larger than any other eigenvalue) with algebraic multiplicity $1$, whose
eigenvector is strictly positive (upon appropriate scaling). 

{\em Proof of Item 3:}  The inequality $\lambda_{max}(Q) \le 
\lambda_{max} (W[k_f])$ is an immediate consequence of the fact that $Q$ is a principal submatrix of the positive semidefinite matrix $W[k_f]$.  The strictness of the inequality for sufficiently large $k_f$ can be proved by contradiction.  If $\lambda_{max}(Q)$ was equal to $\lambda_{max}(W[k_f])$, then from the Courant-Fisher theorem, $\max_{\bf v} \frac{{\bf v}^T W[k_f] {\bf v}}{{\bf v}^T{\bf v}}$ would equal $\lambda_{max}(Q)$, and
further the argument maximizing the quadratic form would be
the dominant eigenvector of $W[k_f]$.  However, notice
that substituting ${\bf v}=\begin{bmatrix} {\bf v}_{max}(Q) \\ {\bf 0} 
\end{bmatrix}$ into the quadratic form yields $\lambda_{max}(Q)$, but this ${\bf v}$ cannot be a dominant eigenvector since it is not strictly positive.  
Hence, a contradiction is reached.

{\em Proof of 4:}  Since $Q$ is invertible, it is in fact positive definite (in addition to being symmetric and positive semidefinite).  It is immediate that $R=Q^{-1}$ 
is symmetric and positive definite.  Since $Q$ is elementwise strictly positive, it follows that $R=Q^{-1}$ is irreducible.

{\em Proof of 5:}  This result on the sign pattern of the inverse was proved for the class of doubly-positive matrices (positive-definite matrices with strictly-positive entries) by Fiedler in \cite{fiedler1},  and generalized to the class of irreducible doubly nonnegative matrices in our recent work \cite{nonnegativenote} (see Theorem 1).

{\em Proof of 6:}  $R$ is positive definite matrix, and hence its diagonal entries are positive as is its determinant.  The off-diagonal entries are seen to be negative from the matrix inversion formula for $2 \times 2$ matrices.  Thus, the matrix is an $M$ matrix. $\blacksquare$

\end{proof}

{\em Remark 1:} The characterizations in Theorem 1 crucially depend on the doubly-nonnegative structure of the Gramian, which is a consequence of the nonnegative structure of the network-consensus model (as defined by nonnegative state, input, and output matrices).  Doubly-nonnegative matrices also arise in other contexts, such as semi-definite programming and covariance-matrix analysis \cite{doublenon1,doublenon2}.

Item 5 of Theorem 1 indicates that inverses of Gramian submatrices have a sophisticated sign pattern.  While the diagonal entries of the inverse  are positive, the off-diagonal entries may be of either sign.  Item 5 indicates, however, that some of the off-diagonal entries must be negative. The pattern of nonnegative off-diagonal entries can be given a graph-theoretic interpretation, which helps to give insight into the target-control metrics.  To formalize this interpretation, it is helpful to define a graph that represents the sign pattern of the inverse of a Gramian submatrix.  Specifically, we define the {\em negative-inverse graph} for 
the invertible Gramian submatrix $W({\cal B},k_f)$ as an (unweighted, undirected) graph on $|{\cal B}|$ vertices labeled $1,\hdots,|{\cal B}|$, where the notation $|{\cal B}|$ indicates the cardinality of the set.  An edge is drawn between vertex $i$ and $j$ if $[W^{-1}]_{ij}$ is negative. The following corollary is an immediate consequence of Item 5 of Theorem 1:

\begin{corollary}
The negative-inverse graph for any invertible Gramian submatrix
$W({\cal B},k_f)$ is connected.
\end{corollary}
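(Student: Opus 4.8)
The plan is to prove the statement by contraposition, using the standard partition characterization of graph connectivity: a graph on the vertex set $\{1,\hdots,|{\cal B}|\}$ is connected if and only if there is no bipartition of its vertices into two nonempty sets with no edge crossing between them. I would show that any such ``empty cut'' in the negative-inverse graph of $W({\cal B},k_f)$ would directly contradict Item 5 of Theorem 1. Recall that, in the notation of the theorem, $R=Q^{-1}=W({\cal B},k_f)^{-1}$, so an edge is present between vertices $i$ and $j$ precisely when $[R]_{ij}<0$.

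First I would suppose, for contradiction, that the negative-inverse graph is disconnected. Then there is a partition of its vertices into two nonempty sets $V_1$ and $V_2$ such that no edge joins a vertex of $V_1$ to a vertex of $V_2$; equivalently, $[R]_{ij}\ge 0$ for every $i\in V_1$ and $j\in V_2$. Next I would realize this partition as a block structure of a permuted copy of $R$. Let $P$ be the permutation matrix that reorders the indices so that those in $V_1$ precede those in $V_2$, and set $T=PRP^{-1}$. Partitioning $T=\begin{bmatrix} T_{11} & T_{12} \\ T_{12}^T & T_{22}\end{bmatrix}$ with $T_{11}$ of size $|V_1|\times|V_1|$ (hence square), the off-diagonal block $T_{12}$ collects exactly the entries $[R]_{ij}$ with $i\in V_1$ and $j\in V_2$. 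By the empty-cut assumption, every entry of $T_{12}$ is nonnegative.

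But Item 5 of Theorem 1, valid for sufficiently large $k_f$, asserts that for \emph{any} permutation $T=PRP^{-1}$ and \emph{any} block-partition of $T$ with $T_{11}$ square, the block $T_{12}$ has at least one negative entry. Since $V_1$ and $V_2$ are nonempty, both blocks are nonempty and the permutation induced by $(V_1,V_2)$ furnishes an admissible instance of Item 5; this contradicts the nonnegativity of $T_{12}$ derived above. Hence no empty cut can exist, and the negative-inverse graph is connected. The argument is essentially a translation between graph-theoretic and matrix language, so I do not anticipate a genuine obstacle; the only care needed is to verify the correspondence between the empty-cut condition and the nonnegativity of the permuted off-diagonal block, and to confirm that the partition arising from $(V_1,V_2)$ satisfies the hypotheses of Item 5 (in particular that $T_{11}$ is square and both blocks are nonempty), which it does precisely because $V_1$ and $V_2$ are nonempty.
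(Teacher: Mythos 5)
Your proof is correct and is exactly the argument the paper intends: the corollary is stated as an immediate consequence of Item 5 of Theorem 1, and your contrapositive translation of a disconnecting bipartition into a permuted block-partition with a nonnegative off-diagonal block is the standard (and only natural) way to make that implication explicit. The one point worth flagging is that, since Item 5 holds only for sufficiently large $k_f$, the corollary inherits that hypothesis, which you correctly acknowledge.
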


The matrix-theoretic properties of the Gramian's principal submatrices developed in Theorem 1 allow characterization
of the defined target-control metrics and associated optimal input signals.  
Several results on the target-control metrics are listed in the following theorem, and then interpreted in the following discussion:

\begin{theorem}
Assume that the input-output network-consensus model is target controllable.  The target-control metrics and associated optimal input signals have the following properties, for all sufficiently large $k_f$:
\begin{enumerate}
    \item The network model has a minimally-secure goal $\overline{\bf y}_{min}$ which is strictly positive.  Additionally, the optimal input signal for the minimally-secure goal is nonnegative for $k=0,\hdots. k_f-1$.
    \item  When the network has two target nodes, the target control energy satisfies $E(k_f,|\overline{\bf y}|) \le 
E(k_f,\overline{\bf y})$ for any goal state ${\bf y}$.
    \item  The projection-manipulation energy satisfies $F_{min}(k_f,|{\bf \alpha}|) \le F_{min}(k_f, {\bf \alpha})$
    for any projection vector ${\bf \alpha}$.  Further, the 
    optimal input signal for manipulation of any projection is nonnegative.
     \item  The global security metrics satisfy the inequality: $E_{min,full}(k_f)<E_{min}(k_f)<F_{min}(k_f)$,
     where $E_{min,full}(k_f)$ refers to the target-security metric when the set of target nodes ${\cal T}$ contains all nodes in the network.
\end{enumerate}
\end{theorem}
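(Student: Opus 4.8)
The plan is to obtain all four items by feeding the closed-form expressions of Lemma 1 into the structural facts of Theorem 1, so that each claim collapses to a statement about the doubly-nonnegative matrix $Q=W({\cal T},k_f)$ or its inverse $R=Q^{-1}$. For Item 1, I would start from the Lemma 1 identity $\overline{\bf y}_{min}={\bf v}_{max}(Q)$; Item 2 of Theorem 1 already guarantees that this dominant eigenvector is strictly positive for large $k_f$, which is the first assertion. For the associated input I would substitute $\overline{\bf y}_{min}$ into the optimal-input formula $\widehat{\bf u}[i]=(CA^{k_f-i-1}B)^T Q^{-1}\overline{\bf y}_{min}$ and use $Q^{-1}{\bf v}_{max}(Q)=\lambda_{max}(Q)^{-1}{\bf v}_{max}(Q)$ to reduce it to $\widehat{\bf u}[i]=\lambda_{max}(Q)^{-1}(CA^{k_f-i-1}B)^T{\bf v}_{max}(Q)$; since $A$, $B$, $C$ are entrywise nonnegative and ${\bf v}_{max}(Q)>0$, each $\widehat{\bf u}[i]$ is nonnegative.

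Item 3 is parallel. After inverting the Lemma 1 formula $F_{min}(k_f,{\bf \alpha})=1/({\bf \alpha}^T Q{\bf \alpha})$, the inequality $F_{min}(k_f,|{\bf \alpha}|)\le F_{min}(k_f,{\bf \alpha})$ is equivalent to $|{\bf \alpha}|^T Q|{\bf \alpha}|\ge {\bf \alpha}^T Q{\bf \alpha}$; writing both quadratic forms termwise and combining $q_{ij}\ge 0$ (Item 1 of Theorem 1) with $|\alpha_i||\alpha_j|\ge \alpha_i\alpha_j$ settles it. For the optimal input I first need the optimal goal direction of the projection-manipulation problem, which Lemma 1 does not supply directly: a short Lagrange-multiplier computation minimizing $\overline{\bf y}^T R\overline{\bf y}$ subject to ${\bf \alpha}^T\overline{\bf y}=1$ gives $\overline{\bf y}^\star=Q{\bf \alpha}/({\bf \alpha}^T Q{\bf \alpha})$. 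Substituting this into the optimal-input formula and simplifying with $Q^{-1}Q=I$ collapses the input to $\widehat{\bf u}[i]=(CA^{k_f-i-1}B)^T{\bf \alpha}/({\bf \alpha}^T Q{\bf \alpha})$, which is nonnegative once ${\bf \alpha}$ is nonnegative; the first part of Item 3 justifies restricting attention to nonnegative projections (the minimally-secure projection ${\bf e}_j$ of Lemma 1 being one such).

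Items 2 and 4 are the comparison inequalities. For Item 2 (two target nodes) I would write $E(k_f,\overline{\bf y})=\overline{\bf y}^T R\overline{\bf y}$ and expand the $2\times 2$ form; by Item 6 of Theorem 1, $R$ is an $M$-matrix, so its off-diagonal entry $r_{12}$ is negative, and the difference $E(k_f,\overline{\bf y})-E(k_f,|\overline{\bf y}|)=2r_{12}(y_1 y_2-|y_1||y_2|)$ is nonnegative since both factors are nonpositive. For Item 4 I would pass to reciprocals, reducing $E_{min,full}(k_f)<E_{min}(k_f)<F_{min}(k_f)$ to $\lambda_{max}(W(k_f))>\lambda_{max}(Q)>\max_i q_{ii}$; the left inequality is exactly Item 3 of Theorem 1, while the right follows from the variational bound $\lambda_{max}(Q)\ge \max_i {\bf e}_i^T Q{\bf e}_i=\max_i q_{ii}$, strict because equality would force some ${\bf e}_i$ to be a dominant eigenvector, contradicting the strict positivity and uniqueness of ${\bf v}_{max}(Q)$ from Item 2 of Theorem 1 (here one should assume $p\ge 2$ so that ${\bf e}_i$ is not strictly positive).

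I expect the main obstacle to be the optimal-input claim of Item 3: unlike the target-control input, the projection-manipulation input is not handed to us by Lemma 1, so I must first solve the constrained minimization to recover the optimal goal direction and only then verify the sign pattern, while being careful that the nonnegativity statement becomes meaningful only after reducing to nonnegative projections. The restriction to two target nodes in Item 2 is a genuine limitation rather than a convenience: for $p\ge 3$ the inverse $R$ is guaranteed only to have \emph{some} negative off-diagonal entry (Item 5 of Theorem 1), not all, so the absolute-value monotonicity of the target-control energy can fail, and no stronger statement should be attempted there.
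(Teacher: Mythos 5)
Your proposal is correct and follows essentially the same route as the paper: each item is reduced via the Lemma~1 formulas to a property of $Q=W({\cal T},k_f)$ or $R=Q^{-1}$ established in Theorem~1 (strictly positive dominant eigenvector for Item~1, the $M$-matrix structure for Item~2, entrywise nonnegativity for Item~3, and the eigenvalue/diagonal majorizations for Item~4). You merely make explicit a few steps the paper compresses --- the Lagrange computation behind the ``direct computation'' of the optimal projection-manipulation input, the $2\times 2$ expansion in Item~2, and the $p\ge 2$ caveat needed for strictness in Item~4 --- all of which are consistent with the paper's argument.
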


\begin{proof}
{\em Proof of Item 1:} The minimally-secure goal $\overline{\bf y}_{min}$ is the dominant eigenvector of $W({\cal T},k_f)$.  From Item 2 of Theorem 1, this dominant eigenvector is strictly positive.  From Equation 5, the optimal target-control input for this goal is 
$\widehat{\bf u}[i]=(C A^{k_f-i-1} B)^T W({\cal T},k_f)^{-1} \overline{\bf y}_{min}$.  Since $\overline{\bf y}_{min}$ is an eigenvector of $W({\cal T},k_f)$, it is also an eigenvector of $W({\cal T},k_f)^{-1}$.  Thus, $W({\cal T},k_f)^{-1} {\bf y}_{min}$ is positive, and it follows that $\widehat{\bf u}[i]$ is nonnegative.

{\em Proof of Item 2:}  The result follows immediately from the fact that $W({\cal T},k_f)^{-1}$ is an M-matrix in this case.

{\em Proof of 3:} Since $W({\cal T},k_f)$ is a nonnegative matrix, it follows that $|{\bf \alpha}^T| W({\cal T},k_f)
|{\bf \alpha}| \ge {\bf \alpha}^T W({\cal T},k_f)
{\bf \alpha}$.  The inequality on the projection-manipulation energy follows immediately. 
The nonnegativity of the input sequence then follows from a direct computation of the optimal input sequence.

{\em Proof of 4:} The inequality relating $E_{min,full}(k_f)$ and $E_{min}(k_f)$ follows from Item 3 of Theorem 1.  The inequality relating $E_{min}(k_f)$ and
$F_{min}(k_f)$ can be derived by noting that $\lambda_{max}(W({\cal T},k_f))$ majorizes the diagonal entries of $W({\cal T},k_f)$; this follows using the same argument as used to derive Item 3 of Theorem 1.  $\blacksquare$

\end{proof}

Item 1 of Theorem 2 indicates that the network model always has a minimally-secure goal (the unit-norm goal that takes the minimum energy to reach) in the positive orthant.  Further, the lowest-energy input needed to reach this goal is nonnegative.  It is worth noting, however, that the minimum-energy input needed to reach other positively-valued goals need not be positive.  

Because of the diffusive structure of the network dynamics, one might postulate that goal states in the positive orthant (i.e., goals with identically-signed entries) require less energy to achieve.
Item 2 in the theorem demonstrates that goal states in the positive orthant indeed require less energy to reach compared to their sign-reversed versions, when there are only two target nodes.  In other words, it is easier to move any pair of nodes' states to a more synchronized goal (with both nodes' goal states having the same sign), than to a comparable differentially-signed goal. The low-energy characteristic of the positive orthant results specifically form the $M$-matrix structure of the inverse Gramian submatrix in the two-target case.  However, the result does not generalize to models with more than two target nodes: as Theorem 1 indicates, the inverse Gramian submatrix has a complicated sign pattern when the target set has
more than two nodes, which means that mixed-sign goals may sometimes require less energy to reach than their positive orthant counterparts.  However, the connectedness of the negative-sign graph (Corollary 1) does indicate that low-energy control is possible for many goal states in the nonnegative orthant.

Per Item 3, projections defined by vectors in the first quadrant are easier to manipulate than their sign-reversed counterparts, regardless of the number of target states. This characterization is a direct consequence of the nonnegative form of the Gramian submatrix.  

Finally, Item 4 provides a comparison of different global security metrics. The inequalities follow immediately from the positive definiteness of the Gramian and its submatrices, however the fact that they are strict is a consequence of the doubly-nonnegative structure of the Gramian.

{\em Remark:} All goal states in the positive orthant require less energy to reach than their sign-reversed counterparts if the corresponding inverse Gramian submatrix is an M-matrix.  Per the discussion above, this is guaranteed when the network has two target nodes.  When the network has three or more target nodes, the inverse Gramian submatrix may or may not be an M-matrix.  The class of nonnegative matrices whose inverses are M-matrices has been characterized algebraically in the linear-algebra literature (see e.g. \cite{willoughby}, and these results can be brought to bear to check whether positive-orthant goal states necessarily can be reached with low energy.

Next, we study how the graph topology of the input-output network consensus model constrains the  associated Gramian submatrix and its inverse.  The main outcome of this analysis 
is that the magnitudes of the entries in the Gramian submatrix  are small if the target nodes are far from the source.  In fact, the entries decrease monotonically as the target nodes are moved further away from the source nodes, in a certain sense (related to cutsets of the network graph). Conversely, metrics related to the inverse Gramian are large if the target nodes are far from the source nodes. To formalize these notions, we find it convenient to consider vertex-cutsets in the network graph that separate the source and target vertices.  Formally, a set of vertices ${\cal C}$ in the graph $\Gamma$ (equivalently, nodes in the network) is referred to as a {\em separating cutset}, if  all directed paths between source and target nodes in $\Gamma$ pass through a vertex in ${\cal C}$.

The following theorem characterizes the principal submatrix of the Gramian associated with the network model (i.e. the ${\cal T}$-Gramian submatrix), and its inverse, in terms of a separating cutset:
\begin{theorem}
Consider the Gramian submatrix $W({\cal T},k_f)$ for the input-output network-consensus model.  Also, let ${\cal C}$
be a separating cutset of the network graph $\Gamma$.  
Then the following inequalities hold:
\begin{enumerate}
    \item $[W({\cal T},k_f)]_{ij} \le max_l [W({\cal C},k_f)]_{ll}$.  That is, all entries in the Gramian submatrix associated with the target nodes are smaller than
    at least one of the diagonal entries of the Gramian matrix
    corresponding to the separating cutset nodes.
    \item Consider any vector ${\bf \alpha}$ such that $|{\bf \alpha}|^T {\bf 1}=1$. Then  ${\bf \alpha }^T W({\cal T},k_f) {\bf \alpha} \le \max_i [W({\cal C},k_f)]_{ii}$.
    \item  $\lambda_{max} (W({\cal T},k_f) \le 
    p \max_i [W({\cal C},k_f)]_{ii}$, where $p$ is the number of target nodes.
\end{enumerate}
\end{theorem}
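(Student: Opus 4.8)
The plan is to reduce all three inequalities to a single \emph{diagonal bound}: every diagonal entry of the target Gramian submatrix is dominated by the largest diagonal entry of the cutset Gramian submatrix, i.e. $[W({\cal T},k_f)]_{ii}\le\max_c[W({\cal C},k_f)]_{cc}$. Once this is established, Items 1--3 follow from elementary positive-semidefiniteness and norm manipulations. To set up, I would re-express Gramian entries through impulse responses exactly as in the proof of Item 1 of Theorem 1. Writing ${\bf h}_{is}\in\R^{k_f}$ for the vector of impulse-response samples $[A^t]_{is}$ for $t=0,\dots,k_f-1$, I have $[W({\cal B},k_f)]_{ij}=\sum_{s\in{\cal S}}{\bf h}_{is}^T{\bf h}_{js}$, so in particular $[W({\cal C},k_f)]_{cc}=\sum_{s\in{\cal S}}\|{\bf h}_{cs}\|^2$. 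Because $A$ is row-stochastic, $[A^t]_{is}$ is precisely the probability that a random walk with transition matrix $A$, started at node $i$, occupies node $s$ at time $t$; and a walk step from $u$ to $v$ (probability $a_{uv}>0$) traverses the edge $v\to u$ of $\Gamma$. Hence a walk from a target node back to a source node runs backwards along a directed source-to-target path of $\Gamma$, so by the defining property of the separating cutset every such walk must meet ${\cal C}$.

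The crux is a first-passage decomposition across ${\cal C}$. Let $f_i^c(r)$ denote the probability that a walk started at target node $i$ first hits ${\cal C}$ at time $r$, at cutset node $c$. By the strong Markov property, for every source $s$ and every $0\le t\le k_f-1$,
\[ [A^t]_{is}=\sum_{c\in{\cal C}}\sum_{r=0}^{t}f_i^c(r)\,[A^{t-r}]_{cs}. \]
Stacking over $s\in{\cal S}$ and introducing the (truncating) down-shift $S^r$ acting block-wise on each ${\bf h}_{cs}$, this reads $H_i=\sum_{c\in{\cal C}}\sum_{r=0}^{k_f-1}f_i^c(r)\,S^r H_c$, where $H_i=({\bf h}_{is})_{s\in{\cal S}}$ satisfies $\|H_i\|^2=[W({\cal B},k_f)]_{ii}$.

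Three facts then give the diagonal bound: (i) the weights form a sub-probability, $\sum_{c,r}f_i^c(r)=P_i(\tau_{\cal C}\le k_f-1)\le 1$; (ii) a truncating shift cannot increase norm, $\|S^r H_c\|\le\|H_c\|$; and (iii) the triangle inequality. Combining, $\|H_i\|\le\sum_{c,r}f_i^c(r)\|S^r H_c\|\le\max_c\|H_c\|$, i.e. $[W({\cal T},k_f)]_{ii}\le\max_c[W({\cal C},k_f)]_{cc}$. This is the main obstacle, since it is where the diffusive (row-stochastic) structure --- which forces the first-passage weights to sum to at most one --- must be combined correctly with the cutset geometry; care is needed with the walk direction (the walk runs along reversed edges of $\Gamma$) and with the truncation bookkeeping hidden in $S^r$.

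Finally I would derive the three items. For Item 1, positive semidefiniteness of $W({\cal T},k_f)$ gives, for the nonnegative entries $q_{ij}=[W({\cal T},k_f)]_{ij}$, that $q_{ij}\le\sqrt{q_{ii}q_{jj}}\le\max(q_{ii},q_{jj})$, which is at most $\max_c[W({\cal C},k_f)]_{cc}$ by the diagonal bound. For Item 2, entrywise nonnegativity yields ${\bf\alpha}^T W({\cal T},k_f){\bf\alpha}\le\sum_{i,j}|\alpha_i||\alpha_j|\,q_{ij}\le\big(\max_c[W({\cal C},k_f)]_{cc}\big)\,\|{\bf\alpha}\|_1^2$, and the constraint $\|{\bf\alpha}\|_1=1$ closes it. For Item 3, since the eigenvalues are nonnegative we have $\lambda_{max}(W({\cal T},k_f))\le\tr W({\cal T},k_f)=\sum_{i=1}^{p}q_{ii}\le p\max_c[W({\cal C},k_f)]_{cc}$, as claimed.
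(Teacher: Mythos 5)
Your proof is correct and follows essentially the same route as the paper's: your first-passage decomposition of $[A^t]_{is}$ across the cutset is exactly the paper's reduced-system convolution identity (the weights $f_i^c(r)$ are the entries of $\widehat{A}^{r-1}\widehat{B}$), and both arguments then combine the sub-stochasticity of these weights with the norm-nonincreasing truncating shift and the triangle inequality to obtain the diagonal bound $[W({\cal T},k_f)]_{ii}\le\max_c[W({\cal C},k_f)]_{cc}$. Your explicit derivations of Items 1--3 from that bound coincide with (and usefully elaborate) the paper's appeal to standard properties of doubly nonnegative matrices.
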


\begin{proof}
From the proof of Theorem 1, the diagonal entries of the Gramian $Q=W(k_f)$ can be written in terms of the impulse responses of the network model at the corresponding nodes.
Specifically, the $l$th diagonal entry can be written as $q_{ll}=\sum_{z \in {\cal S}} h_{zl,k_f}^T h_{zl,k_f}$.

To prove the theorem, we first compare $q_{ll}$ for $l \in {\cal C}$ with $q_{ll}$ for $l \in {\cal T}$.
The crux of the proof lies in recognizing that the impulse responses $h_{zl,k_f}$ for $l \in {\cal T}$ can be expressed in terms of the impulse responses $h_{zl,k_f}$ for $l \in {\cal C}$  To formalize this, let the first define a set ${\cal V}$ which contains all nodes that are isolated from the source nodes by ${\cal C}$ (note that the set ${\cal V}$ contains ${\cal T}$ as well as all other nodes separated from the source nodes by the cutset).  The vector $\widehat{\bf x}[k]$ is defined to contain the states of the nodes in ${\cal V}$.  Then notice that the response at any node $l$ within ${\cal V}$ due to an impulse input at node $z$ can be found by solving:
\begin{eqnarray}
  & &   \widehat{\bf x}[k+1]=\widehat{A} \widehat{\bf x}[k] +\widehat{B} {\bf h}_{z,k_f}[k]  \label{eq:main}\\
  & & h_{zl,k_f}[k]={\bf e}_p^T \widehat{\bf x}[k],
\end{eqnarray}
where $\widehat{A}$ is the principal submatrix of $A$ formed by maintaining the rows/columns identified in ${\cal V}$, $\widehat{B}$ is the submatrix of $A$ with rows specified by ${\cal V}$  and columns specified by ${\cal C}$, 
${\bf h}_{z,k_f}[k]$ concatenates the impulse responses $h_{zl,k_f}$ for $l \in {\cal C}$ at time $k$, and the 0--1 indicator vector ${\bf e}_p$ is a 0--1 indicator vector which selects the response at node $l$ from the vector $\widehat{\bf x}[k]$,  We note that the expression holds for $k=0,1,\hdots,k_f$.
We stress that this expression allows computation of the impulse response at any node $l$ in ${\cal T}$  without requiring tracking of the states of nodes outside ${\cal V}$,
provided that the impulse responses at nodes in ${\cal C}$ are known.  

From Equation \ref{eq:main}, it follows that 
\begin{equation}
    h_{zl,k_f}[k]={\bf e}_p^T \begin{bmatrix} \widehat{A}^{k-1}\widehat{B} & \hdots & \widehat{A} \widehat{B} & \widehat{B} \end{bmatrix} {\bf h}_{z,k_f}, \label{eq:res}
\end{equation}
where ${\bf h}_{z,k_f}=\begin{bmatrix} {\bf h}_{z,k_f}[0] \\ \vdots \\ {\bf h}_{z.k_f}[k-1] \end{bmatrix}$.  Now consider the sum of the entries in each row of the matrix $\begin{bmatrix} \widehat{A}^{k-1}\widehat{B} & \hdots & \widehat{A} \widehat{B} & \widehat{B} \end{bmatrix}$.  
The sum of each row in this matrix is less than or equal to the sum of the corresponding row in the top-right block of $\begin{bmatrix} \widehat{A} & \widehat{B} \\ 0 & I \end{bmatrix}^{k-1}$, where the dimension of the identity matrix has been chosen so that exponentiated matrix is square.   However, as the 
matrix $\begin{bmatrix} \widehat{A} & \widehat{B} \\ 0 & I \end{bmatrix}$ has unity row sums, so does its powers.  Thus, the vector  ${\bf e}_p^T \begin{bmatrix} \widehat{A} & \widehat{B} \\ 0 & I \end{bmatrix}^{k-1}$ has entries that are nonnegative and sum to less than $1$.  Considering Equation \ref{eq:res} for $k=0,1,\hdots,k_f$, we thus find that $h_{zl,k_f}$ can be found by convolving the impulse responses $h_{zi,k_f}$ for $i \in C$ with nonnegative signals, whose total sum is less than $1$, and then summing.  Further, the same convolution can be applied to find the impulse response for each source node $z \in {\cal S}$. However, it is known that convolution by a nonnegative signal whose entries sum to less than $1$ serves to decrease the energy (two-norm) of a signal.  We thus recover that $q_{ll}=\sum_{z \in {\cal S}} h_{zl,k_f}^T h_{zl,k_f}$ must be smaller for each $l \in {\cal V}$ as compared to at least one $l \in {\cal C}$.  Since ${\cal T} \in {\cal V}$,
we have thus shown that  the diagonal entries of $W({\cal T},k_f)$ are less than or equal to  $max_l [W({\cal C},k_f)]_{ll}$.  Finally, from the fact that 
$W({\cal T},k_f)$ is doubly nonnegative, it is immediate that the off-diagonal entries are less than or equal to the largest diagonal entry.  Thus, Item 1 of the theorem statement is verified.  Items 2 and 3 then follow immediately from standard properties of positive definite matrices. $\blacksquare$

\end{proof}

The graph-theoretic analyses of Gramian submatrix  properties in Theorem 3 immediately yield graph-theoretic bounds on the defined target control metrics.  In particular, the target control metrics can be majorized in terms of the energy required to manipulate the nodes on any separating cutset of the network graph.  To present these comparisons, it is helpful to explicitly define control energy metrics for the nodes on a
separating cutset.  Specifically, first consider any vertex 
$c$ contained in a separating cutset ${\cal C}$ of the network graph.  We refer to the minimum input energy required to move the state of the corresponding network node $c$ to a unity value over the interval $[0,k_f]$ (assuming that the network is initially relaxed) as $E_c(k_f)$.   In analogy with Definition
2, $E_c(k_f)$ can be formally defined as $E_c(k_f)=\min_{{\bf u}[0],\hdots, {\bf u}[k_f-1]}\sum_{i=0}^{k_f-1} {\bf u}^T[i] {\bf u}[i]$, subject to the constraint that the input sequence drives the system from a relaxed state to $x_c[k_f]=1$.  We then
define the {\em cutset-control energy} as 
$E_{\cal C}(k_f)=\min_{c \in {cal C}} E_c(k_f)$.  The target-control metrics can be majorized in terms of the cutset-control energy, as follows:
\begin{theorem}
Consider the input-output network consensus model. The following inequalities hold for the target-control metrics, For any separating cutset ${\cal C}$ of the network graph:
\begin{enumerate}
    \item The projection-manipulation energy satisfies 
    $F(k_f, {\bf \alpha}) \ge E_{\cal C}(k_f)$ for any ${\bf \alpha}$ such that $|{\bf \alpha}|^T {\bf 1}=1$.  
    \item  The projection security satisfies 
    $F_{min}(k_f) \ge E_{\cal C}(k_f)$.  
    \item  The target security satisfies
    $E_{min(k_f)} \ge \frac{1}{p} E_{\cal C}(k_f)$.
\end{enumerate}
\end{theorem}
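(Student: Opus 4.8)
The plan is to reduce all three inequalities to the Gramian bounds already established in Theorem 3, reading off the metric expressions from Lemma 1. The one preparatory computation I would carry out first is to re-express the cutset-control energy $E_{\cal C}(k_f)$ in Gramian terms. I would observe that $E_c(k_f)$ is exactly the target-control energy for the singleton target set $\{c\}$ with goal $1$; applying Lemma 1 to this one-node target set, whose Gramian submatrix $W(\{c\},k_f)$ is the scalar $[W(k_f)]_{cc}$, gives $E_c(k_f) = 1/[W(k_f)]_{cc}$. Since the diagonal entries of any principal submatrix agree with those of $W(k_f)$, this yields
\begin{equation}
E_{\cal C}(k_f) = \min_{c \in {\cal C}} \frac{1}{[W({\cal C},k_f)]_{cc}} = \frac{1}{\max_{c \in {\cal C}} [W({\cal C},k_f)]_{cc}}.
\end{equation}

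With this identity in hand, I would dispatch the three items in turn. For Item 1, Lemma 1 gives $F(k_f,{\bf \alpha}) = 1/({\bf \alpha}^T W({\cal T},k_f){\bf \alpha})$, and Item 2 of Theorem 3 bounds the quadratic form by $\max_c [W({\cal C},k_f)]_{cc}$ whenever $|{\bf \alpha}|^T {\bf 1}=1$; reciprocating the inequality (both sides positive) delivers $F(k_f,{\bf \alpha}) \ge E_{\cal C}(k_f)$. Item 2 I would treat as the specialization of Item 1 to the standard basis vectors ${\bf e}_j$, each of which satisfies $|{\bf e}_j|^T {\bf 1}=1$ and picks out a single diagonal entry $[W({\cal T},k_f)]_{jj}$; minimizing over $j$ recovers $F_{min}(k_f)$, and the diagonal majorization in Item 1 of Theorem 3 closes the bound. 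Item 3 proceeds identically: Lemma 1 gives $E_{min}(k_f) = 1/\lambda_{max}(W({\cal T},k_f))$, and Item 3 of Theorem 3 supplies $\lambda_{max}(W({\cal T},k_f)) \le p\,\max_c[W({\cal C},k_f)]_{cc}$, so reciprocating yields $E_{min}(k_f) \ge (1/p)\,E_{\cal C}(k_f)$.

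The main point is that Theorem 3 already carries all of the analytical weight, so the only genuine step is the preparatory identification of $E_{\cal C}(k_f)$ with $1/\max_c[W({\cal C},k_f)]_{cc}$; everything after that is reciprocation of positive scalars. The one subtlety I would watch is the direction of the inequalities under reciprocation, which is legitimate precisely because target controllability forces the relevant Gramian submatrices to be positive definite (Item 4 of Theorem 1), keeping every quadratic form, diagonal entry, and eigenvalue strictly positive.
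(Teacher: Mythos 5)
Your proof is correct and follows essentially the same route as the paper's: each item is obtained by combining the corresponding Gramian bound from Theorem 3 with the metric formulas of Lemma 1 and reciprocating. Your explicit preparatory identification $E_{\cal C}(k_f) = 1/\max_c [W({\cal C},k_f)]_{cc}$ is a useful step the paper leaves implicit, but it does not change the argument.
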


\begin{proof}
Item 1 in the theorem follows from Item 2 of Theorem 3, together with the expression for the projection-manipulation energy in Lemma 1.  Item 2 is verified by noticing that the inequality in Item 1 holds for all ${\bf \alpha}$ with unit one-norm, and hence holds for the vector ${\bf \alpha}$ that minimizes $F(k_f, {\bf \alpha})$.  Item 3 follows from Item 3 of Theorem 3, together with  the expression for the target security in Lemma 1. $\blacksquare$
\end{proof}

Theorem 4 demonstrates that the target control metrics follow a spatial majorization, with respect to cutsets in the network graph away from the source nodes.  Specifically, the energy required to manipulate any projection of the target state is larger than the energy required to manipulate at least one of the nodes on a separating cutset. Thus, state projections become more secure (harder to manipulate) away from the source nodes.  A similar result also holds for the target security metric, but with a scale factor related to the number of nodes being manipulated.

The values of the target-control metrics for long time horizons (i.e., in the limit of large $k_f$) are of interest, since they serve as lower bounds on energy requirements for arbitrary horizons.  Because the network-consensus process naturally asymptotes to a synchronized state, one might expect that manipulation of the dynamics to a desired synchronized state can be achieved with a vanishingly-small energy requirement, given a long time horizon.  This notion can be formalized by characterizing Gramian submatrices and their spectra for large $k_f$.  This characterization of Gramian submatrices, and consequent analyses of the target-control metrics, are formalized in the following theorem:

\begin{theorem}
Consider any principal submatrix of the Gramian for the network model, say $Q=W({\cal B},k_f)$.   The matrices $Q$ has the following properties:
\begin{enumerate}
    \item $Q$ can be written as 
    $Q=(k_f \sum_{i \in {\cal S}} w_i^2) {\bf 1}{\bf 1}^T+H(k_f)$, where the absolute values of the entries in the matrix $H(k_f)$ have an upper bound that is independent of $k_f$.  In the expression, $w_i$ is $i$th entry in the left eigenvector of $A$ associated with its unity eigenvalue, where the eigenvector has
    been normalized so that its entries sum to $1$. Also,
    ${\bf 1}$ represents a vector with all entries equal to $1$, of appropriate dimension.
    \item  The dominant eigenvalue of the matrix $Q$ is given by $\lambda_{max}(Q)=|{\cal B}| k_f \sum_{i \in {\cal S}} w_i^2+\widehat{\lambda}(k_f)$, where $|\widehat{\lambda}(k_f)|$ has an upper-bound that is independent of $k_f$.  The corresponding
    dominant eigenvector is given by ${\bf v}={\bf 1}+{\bf c}(k_f)$, where each entry in $|{\bf c}(k_f)|$ is upper bounded by an asymptotically-vanishing function of $k_f$.
    \end{enumerate}
    
     Now consider target control for the input-output network-consensus model.  Provided that the model is target controllable, the target security metric is given by: $E_{min}(k_f)=\frac{1}{p k_f \sum_{i \in {\cal S}} w_i^2}+\widehat{E}(k_f)$, where $|\widehat{E}(k_f)|$ is upper-bounded by an asymptotically-vanishing function of $k_f$.
     The minimally-secure goal is given by 
     $\overline{\bf y}_{min}={\bf 1}+{\bf c}(k_f)$,
     where each entry in $|{\bf c}(k_f)|$ is upper bounded by an asymptotically-vanishing function of $k_f$.
\end{theorem}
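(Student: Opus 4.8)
The plan is to exploit the Perron--Frobenius limit of the row-stochastic matrix $A$. Since $A$ is irreducible and aperiodic, its unit eigenvalue is simple and strictly dominant, so its powers converge geometrically; I would write $A^i = {\bf 1}{\bf w}^T + E_i$, where ${\bf w}$ is the normalized left eigenvector and $\|E_i\| \le \kappa \rho^i$ for some $\rho \in (0,1)$ (the subdominant eigenvalue magnitude) and constant $\kappa$. Substituting into $A^i B$ and using that $B B^T = \sum_{s \in {\cal S}} {\bf e}_s {\bf e}_s^T$, the outer product $(A^iB)(A^iB)^T$ splits into a leading rank-one term ${\bf 1}{\bf w}^T BB^T {\bf w}{\bf 1}^T = \gamma\,{\bf 1}{\bf 1}^T$ with $\gamma = \sum_{s\in{\cal S}} w_s^2$ (which is independent of $i$), two cross terms linear in $E_i$, and one quadratic term in $E_i$.

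Summing over $i = 0, \ldots, k_f-1$, the leading term accumulates to $k_f \gamma\,{\bf 1}{\bf 1}^T$, while the cross and quadratic terms are each dominated by a convergent geometric series in $\|E_i\|$ and therefore sum to a matrix whose entries are bounded uniformly in $k_f$. Restricting to the rows and columns indexed by ${\cal B}$ preserves the ${\bf 1}{\bf 1}^T$ structure (now of dimension $|{\cal B}|$) and leaves the remainder bounded; calling this remainder $H(k_f)$ establishes Item 1.

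For Item 2, I would treat $Q = (k_f \gamma)\,{\bf 1}{\bf 1}^T + H(k_f)$ as a symmetric perturbation problem (note $H(k_f) = Q - k_f\gamma\,{\bf 1}{\bf 1}^T$ is symmetric). The unperturbed matrix has a single nonzero eigenvalue $|{\cal B}|\,k_f\gamma$ with eigenvector ${\bf 1}$, and all remaining eigenvalues zero. Weyl's inequality then confines $\lambda_{max}(Q)$ to within $\|H(k_f)\|$ of $|{\cal B}|\,k_f\gamma$, producing the uniformly bounded correction $\widehat\lambda(k_f)$. For the eigenvector, the decisive observation is that the spectral gap between the dominant and the remaining eigenvalues grows linearly in $k_f$ (the dominant eigenvalue is $\Theta(k_f)$ while the rest are $O(\|H\|)$), so a Davis--Kahan $\sin\theta$ estimate bounds the angle between ${\bf v}_{max}(Q)$ and ${\bf 1}$ by $O(1/k_f)$; writing ${\bf v}_{max}(Q) = {\bf 1} + {\bf c}(k_f)$ then gives the asymptotically vanishing ${\bf c}(k_f)$.

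Finally, for the target-control metrics I would specialize to ${\cal B} = {\cal T}$, $|{\cal T}| = p$, and substitute Item 2 into the Lemma 1 formula $E_{min}(k_f) = 1/\lambda_{max}(W({\cal T},k_f))$. Expanding the reciprocal $1/(p k_f \gamma + \widehat\lambda) = 1/(p k_f \gamma) - \widehat\lambda/(p k_f \gamma)^2 + \cdots$ isolates the leading term $1/(p k_f \sum_{i\in{\cal S}} w_i^2)$ and bounds $\widehat E(k_f) = O(1/k_f^2)$, which vanishes asymptotically. The minimally-secure goal claim is then immediate, since Lemma 1 identifies it with ${\bf v}_{max}(W({\cal T},k_f))$, already characterized in Item 2. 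I expect the eigenvector convergence in Item 2 to be the main obstacle, since it is the only step requiring a quantitative spectral-gap argument rather than a direct estimate; everything else reduces to the geometric decay of $E_i$ together with elementary perturbation and reciprocal expansions.
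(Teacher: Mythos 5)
Your proposal follows essentially the same route as the paper: decompose $A^i={\bf 1}{\bf w}^T+E_i$ with geometrically decaying remainder, sum to get the rank-one-plus-bounded form of the Gramian, and then treat the dominant eigenpair as a perturbation of $k_f\gamma\,{\bf 1}{\bf 1}^T$ before substituting into Lemma~1. The only difference is cosmetic: where the paper rescales by $1/k_f$ and invokes classical simple-eigenvalue perturbation theory, you make the same step quantitative via Weyl's inequality and a Davis--Kahan gap argument, which is a sound (arguably more explicit) instantiation of the identical idea.
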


\begin{proof}
The state matrix $A$ for the network consensus model
has a single strictly dominant eigenvalue at $1$, with a corresponding right eigenvector of ${\bf 1}$ and a left eigenvector which is entrywise strictly
positive.  Thus, from the Jordan form of $A$, it follows immediately that powers of the matrix can be expressed as $A^n={\bf 1}{\bf w}^T+Q(n)$, where
${\bf w}^T$ is the left eigenvector of $A$ associated with the $0$ eigenvalue (whose entries have been normalized to sum to $1$), and $Q(n)$ is a matrix whose entries are each upper-bounded by a decaying exponential function of $n$.  Substituting
the expression for $A^n$ into the Gramian formula,
we find that $W(k_f)=\sum_{n=0}^{k_f-1} ({\bf 1}{\bf w}^T+R(n))B B^T  ({\bf 1}{\bf w}^T+R(n))$. With some algebra, we find that $W(k_f)=k_f \sum_{i \in {\cal S}} w_i^2 {\bf 1} {\bf 1}^T +J(k_f)$, where the entries in $J(k_f)$ each have an upper bound that is independent of $k_f$.  The form of the principal 
submatrix $Q$ of the Gramian given in the theorem statement (Item 1) follows immediately.

To characterize the dominant eigenvalue/eigenvector for $Q$, notice that the matrix can be written as 
$k_f((\sum_{i \in {\cal S}} w_i^2) {\bf 1}{\bf 1}^T
+\frac{1}{k_f}H(k_f))$.  Next, notice that the 
matrix $\sum_{i \in {\cal S}} w_i^2) {\bf 1}{\bf 1}^T$ has a non-repeated eigenvalue equal to $|{\cal B}| \sum_{i \in {\cal S}} w_i^2$, with corresponding eigenvector
equal to ${\bf 1}$; the remaining eigenalues of the matrix are equal to $0$. From standard eigenvalue and eigenvector perturbation results for non-repeated eigenvalues \cite{wilkinson}, it follows that the matrix $\sum_{i \in {\cal S}} w_i^2) {\bf 1}{\bf 1}^T
+\frac{1}{k_f}H(k_f)$ has a dominant eigenvalue equal to 
$|{\cal B}| \sum_{i \in {\cal S}} w_i^2+g(k_f)$,
where $|g(k_f)|$ is upper-bounded by a function of
the form $\frac{g}{k_f}$ for some constant $g$.  The corresponding dominant eigenvector is equal to ${\bf v}={\bf 1}+{\bf c}(k_f)$, where each entry in $|{\bf c}(k_f)|$ is upper bounded by an asymptotically-vanishing function of $k_f$.  The characterization of the dominant eigenvalue and eigenvector in Item 2 of the theorem statement follows immediately.

The characterization of the target security metric then follows from Lemma 1. $\blacksquare$
\end{proof}

{\em Remark:}  The graph-theoretic analysis of Gramian submatrices in Theorem 3 is relevant to myriad techniques which use the Gramian, beyond assessment of the open-loop control energy.  For instance, the Gramian is used in several model reduction methods such as the balanced truncation algorithm \cite{mred1,gugercin}.  The graph-theoretic analysis suggestion that, if the system being reduced has a diffusive-network structure,   
these model reduction methods can be adapted to also maintain contiguous portions of the network's graph.  We leave a careful study to future work.

\section{Example}

The spatial majorization of  target-control metrics, as developed in Theorems 3 and 4, is illustrated in an example $50$-node network.  The graph $\Gamma$ for the example network was constructed by placing vertices randomly in the unit square, and connecting vertices within a certain radius.  The edge weights of the incoming edges into each vertex were selected to be identical.  

One node in the network was subjected to actuation. 
The energy required to manipulate each individual node's state over a time horizon of 200 steps, which is the inverse of the corresponding diagonal entry of the Gramian, was determined.  These energy requirements are illustrated on the network graph in Figure \ref{fig:1}. Specifically, the actuated node (in the bottom right part of the graph) is shown in red in the plot.  Also, for all nodes, the energy required for manipulation is indicated by the size of the disk at the node.  It is seen that the nodes that are close to the source (actuated) node in the graph can be manipulated with limited energy, while distant nodes require significant energy to manipulate.  This spatial growth in the energy requirement is commensurate with Theorems 3 and 4.

\begin{figure}[!htb]
      \centering
      \includegraphics[scale=0.78]{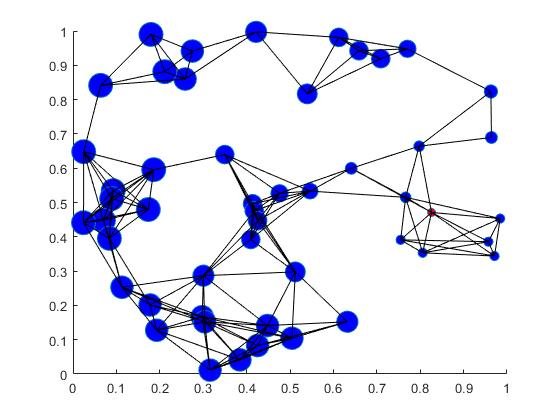} 
       \caption{The energy required to manipulate each individual node in a network-consensus process from
       a source node is shown, for a $50$-node network.  
       The source node is colored red (see bottom right part of the plot).  For each node, the size of the disk indicates the energy requirement.  Nodes near the source can be manipulated with less energy.}
      \label{fig:1}
\end{figure}

\end{document}